\documentclass[11pt]{article}
\usepackage[dvipsnames]{xcolor}
\usepackage{amssymb,amsfonts,amsmath,amsthm,amscd,dsfont}

\usepackage{fullpage}
\usepackage[colorlinks]{hyperref}
\usepackage[nameinlink,capitalize,nosort]{cleveref}

\hypersetup{
  linkcolor=[rgb]{0,0,0.4},
  citecolor=[rgb]{0, 0.4, 0},
  urlcolor=[rgb]{0.6, 0, 0}
}

\newcommand{\F}{\mathbb{F}}
\newcommand{\N}{\mathbb{N}}

\newcommand{\CC}{\mathbb{C}}

\newcommand{\vx}{\mathbf{x}}
\newcommand{\vh}{\mathbf{h}}
\newcommand{\vg}{\mathbf{g}}

\newcommand{\va}{\mathbf{a}}
\newcommand{\vy}{\mathbf{y}}
\newcommand{\vv}{\mathbf{v}}
\newcommand{\vu}{\mathbf{u}}
\newcommand{\vb}{\mathbf{b}}
\newcommand{\vc}{\mathbf{c}}
\newcommand{\vd}{\mathbf{d}}
\newcommand{\vq}{\mathbf{q}}

\newcommand{\Span}[1]{\mathrm{span}\{ #1 \}}
\newcommand{\rank}{\mathrm{rank}}

\newcommand{\adj}{\mathrm{Adj}}

\newtheorem{theorem}{Theorem}[section]
\newtheorem{definition}[theorem]{Definition}

\newtheorem{claim}[theorem]{Claim}
\newtheorem{lemma}[theorem]{Lemma}
\newtheorem{corollary}[theorem]{Corollary}

\crefname{claim}{Claim}{Claims}

\crefname{corollary}{Corollary}{Corollaries}

\title{A Quadratic Lower Bound on Determinantal Complexity}
\author{Mrinal Kumar\thanks{Tata Institute of Fundamental Research, Mumbai, India. Email: \texttt{mrinal@tifr.res.in}.  Research supported by the Department of Atomic Energy, Government of India, under project number RTI400112, and in part by research grants from Google and Premji Invest.}
\and
 Ben Lee Volk\thanks{Efi Arazi School of Computer Science, Reichman University, Israel. Email: \texttt{benleevolk@gmail.com}. The research leading to these results has received funding from the Israel Science Foundation (grant number 843/23).}}

\date{}

\begin{document}

\maketitle

\abstract{
We prove an $\Omega(n^2)$ lower bound on the determinantal complexity of the power sum polynomial $\sum_{i=1}^n x_i^n$ over the field of complex numbers.

A similar result was claimed in a recent paper of Sheshadri \cite{Shesh2026}, via an AI-assisted and AI-written proof. Assuming its correctness, this was the first super-linear lower bound for this fundamental algebraic problem for any explicit polynomial. However, the authors of this note were unable to follow the details and verify the argument in \cite{Shesh2026} in spite of considerable effort on their part. 

The proof we provide here is short, (almost) self-contained and seemingly simpler.

\vspace{2\baselineskip}
\noindent \textbf{AI Use: }ChatGPT Astra was used by the authors on multiple occasions to parse through some of the parts of Sheshadri's proof in \cite{Shesh2026} and the outline of the proof in this note came out of this exercise. The final exposition as well as some of the final details in this note are due to its human authors.  
}

\section{Introduction}

\paragraph{Determinantal Complexity.}
The \emph{determinantal complexity} of a polynomial $F(\vx) \in \CC[\vx]$ is the smallest $m \in \N$ such that there is an $m\times m$ matrix $M(\vx)$, whose entries are polynomials of degree at most one in $\CC[\vx]$ such that $\det(M(\vx))$ equals $F$. 

Early interest in obtaining small determinantal representations can be traced back to P\'{o}lya \cite{Polya13} and Szeg\H{o} \cite{Szego13}.
The importance of this notion, however, was first highlighted by Valiant \cite{Val79} who laid the foundations for algebraic complexity theory.

Valiant \cite{Val79} established the fact that every polynomial has a finite determinantal representation, and observed that such representations can be seen as a natural computational model. This fact, together with appropriate notions of reductions and completeness, played a crucial role in Valiant's project of developing a complexity theory for algebraic computation. The question of proving that the determinantal complexity of the permanent polynomial of a symbolic $n\times n$ matrix (over fields of characteristic not equal to two) is superpolynomially growing in $n$ has served as a guiding light in much of the research in algebraic complexity over the last five decades, as it can be seen as a natural algebraic analog of the P vs.\ NP problem.

While it isn't entirely surprising that this question, or more generally the question of proving a super-polynomial lower bound on the determinantal complexity of \emph{any} explicit polynomial is still open, it is surprising that till very recently, no super-linear lower bound was known for this complexity measure. Mignon and Ressayre \cite{MR04} famously showed a quadratic lower bound on the determinantal complexity of the $n \times n$ permanent polynomial (see also Cai, Chen and Li \cite{CCL10}). These lower bounds, however, while being quadratic in the degree of the polynomial (which serves as the trivial lower bound on determinantal complexity for any polynomial), are proved for a polynomial with $n^2$ variables, so they are in fact not larger than the number of variables of the polynomial. Therefore, in the setting of discussing super-linear or similar such weak lower bounds for this complexity measure we prefer to concentrate on $n$-variate polynomials of degree at most $n$.

The best lower bound on the determinantal complexity of an explicit $n$-variate polynomial known till recently (as a function of the number of variables $n$) was a bound of (roughly) $1.5n$ shown by the authors of this paper for the polynomial $\sum_{i=1}^n x_i^n$ in a prior work (\cite{KV22}).

This state of the art is especially surprising in the light of the fact that we have known super-linear lower bounds on fairly strong algebraic circuit models for some time, including algebraic circuits \cite{Strassen1973b,BS83}, algebraic formulas \cite{Kal85} and algebraic branching programs (ABPs) \cite{K19,CKSV22}. While determinantal complexity and ABP complexity are polynomially equivalent, the polynomial blow-up implies that the lower bound for ABP complexity does not immediately carry over for determinantal complexity (see also \cite{CKV24}).

\paragraph{Sheshadri's recent proof \cite{Shesh2026}.}

In one of the relatively early instances of the use of AI in computational complexity, Karthik Sheshadri claimed in an AI-assisted proof \cite{Shesh2026} that the determinantal complexity of the polynomial $\sum_{i=1}^n x_i^n$ is $\Omega(n^2)$. However, at least the authors of this paper were unable to follow the argument, or check the correctness of the claims in spite of substantial effort on their part. The writing of the paper appeared to be heavily AI-based, with substantial use of either advanced algebraic-geometric notions, or non-standard technical jargon in many places, making the result inaccessible to an algebraic complexity theory audience. 

Assuming the correctness of the argument, this proof provided the first super-linear lower bound on the determinantal complexity of any explicit $n$-variate polynomial. Given the significance of such a result, and the substantial interest in understanding the ideas in the proof within the algebraic complexity research community, there appeared to be merit in obtaining a simpler proof (or a simpler exposition) of this result. 

The purpose of this note is to  provide such a proof. The proof presented here indeed happens to be short, simple, to a large extent intuitive, and self-contained apart from the use of a slightly uncommon (and yet, classical) form of Bézout's theorem. While we do not really follow the ideas in Sheshadri's proof enough to say this with any degree of confidence, our impression is that the proof in this note is quite close to (or even exactly the same as) the intended proof in \cite{Shesh2026}. It is quite conceivable that the simplicity of the description of the proof in this note is an artifact of us looking only for asymptotic bounds, as opposed to the proof in \cite{Shesh2026} where attention is also paid to the constants obtained in the bound. We urge the curious reader to look at the proof in \cite{Shesh2026} to form their opinion on this.  

We start by stating the theorem below, discuss some intuition behind the main ideas in the proof in \cref{sec:overview}, followed by the formal details in \cref{sec:formal-details}. In the interest of keeping this note short, we also refer the reader to \cite{KV22} for a more detailed discussion on the prior work on determinantal complexity. 
\begin{theorem}\label[theorem]{thm:main}
For all sufficiently large $n \in \N$, the following is true. 

Let $M_0, M_1, \ldots, M_n \in \CC^{m\times m}$ be matrices and let $M(\vx)$ be defined as $M(\vx) := M_0 + \sum_{i=1}^n x_iM_i$. If $\det(M(\vx)) = \sum_{i=1}^n x_i^n$, then $m = \Omega(n^2)$. 
\end{theorem}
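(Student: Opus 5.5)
The plan is to reduce to a plane curve and compare its genus with the size of the determinantal representation. First fix the local picture. Put $f := \sum_{i=1}^n x_i^n$. The singular locus of $V(f)$ is just the origin, since $\partial f/\partial x_i = n x_i^{n-1}$. The basic fact from \cite{KV22} applies here: if $\operatorname{corank} M(\va) \ge 2$ then every first-order partial derivative of $\det M$ vanishes at $\va$, because $\partial_i \det M = \mathrm{tr}(\adj(M)M_i)$ and $\adj(M(\va))=0$. Hence $M(\va)$ has corank exactly one at every point $\va \in V(f)\setminus\{0\}$. There $\adj(M(\va))$ has rank one, so it factors as $\vu(\va)\vv(\va)^T$, where $\vu(\va)$ spans $\ker M(\va)$ and $\vv(\va)$ spans $\ker M(\va)^T$. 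This gives a morphism $\varphi: V(f)\setminus\{0\}\to \mathbb{P}^{m-1}$, $\va\mapsto[\vu(\va)]$. In the affine chart, $\varphi$ is given by the nonzero columns of $\adj(M(\vx))$, which are polynomials of degree at most $m-1$.

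Next, cut down to a curve. Restrict $\vx$ to a generic affine $2$-plane $P$ that avoids the origin. By Bertini, $C := V(f)\cap P$ is a smooth affine plane curve of degree $n$. For generic $P$ its projective closure is also smooth, since the points at infinity are the generic line section of the Fermat hypersurface $\sum x_i^n=0$ in $\mathbb{P}^{n-1}$, which is smooth. So $C$ has genus $g=(n-1)(n-2)/2 = \Theta(n^2)$. The restriction $M|_P = A_0 + yA_1 + zA_2$ is an $m\times m$ linear pencil with $\det = f|_P$. Homogenizing with a variable $t$ gives $\det(tA_0 + yA_1 + zA_2) = t^{m-n}\tilde f(t,y,z)$. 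The cokernel of this homogeneous matrix, viewed as a map $\mathcal{O}_{\mathbb{P}^2}(-1)^m\to\mathcal{O}_{\mathbb{P}^2}^m$, is a sheaf supported on $\{\tilde f=0\}\cup\{t=0\}$. On the affine part this sheaf is exactly the line bundle $L$ on $C$ dual to the kernel map $\varphi$.

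The key step is to get an upper bound on the genus in terms of $m$. I would use the resolution $0\to\mathcal{O}(-1)^m\to\mathcal{O}^m\to\mathcal{F}\to 0$, with $\mathcal{F}$ supported on the closure of $C$ and on the line at infinity. From it, read off $h^0(\mathcal{F})\ge m$ and $h^1(\mathcal{F}(-1))=0$, because $H^1$ and $H^2$ of $\mathcal{O}(-1)$ and $\mathcal{O}(-2)$ on $\mathbb{P}^2$ vanish. Let $\mathcal{F}_C$ be the part of $\mathcal{F}$ on the smooth projective curve $\bar C$; it is a torsion-free rank-one sheaf, hence a line bundle $L$ there. We get $h^1(\bar C, L(-1))=0$, and, from $H^0(\mathcal{F}(-1))=0$ after Euler-characteristic counting, also $h^0(L(-1))\le$ something small. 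Riemann–Roch on $\bar C$ then gives $\deg L(-1) = g-1+\chi(L(-1))$. On the other hand, the degree of $\mathcal{F}$ along $C$ is controlled by $m$: the total first Chern class of $\mathcal{F}$ is $m\cdot c_1(\mathcal{O}(1))$, split between $\bar C$ and the line at infinity. This yields $\deg L \le m+O(n)$, hence $g-1 \le \deg L(-1)+O(1)\le m + O(n)$. Combined with $g=\Theta(n^2)$, this gives $m=\Omega(n^2)$.

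I expect the main obstacle to be this degree bookkeeping. The extra factor $t^{m-n}$ means the cokernel is not supported on $\bar C$ alone, and one has to show that the part at infinity cannot absorb the degree. In particular, one must rule out $\mathcal{F}$ having embedded or non-reduced behaviour at the points $\bar C\cap\{t=0\}$ that would inflate $\deg L$. I would first try to do the splitting with the classical intersection-multiplicity version of Bézout: count the zeros on $\bar C$ of a generic section (a generic linear combination of the adjugate column entries, minus the common factors) as at most $m$ plus a correction of order $n$ from the $n$ points at infinity. If that fails, I would fall back on a transversality argument, choosing $P$ so that the line at infinity meets $\bar C$ transversally and away from the corank-two locus of the homogenized pencil.
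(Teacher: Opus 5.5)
There is a genuine gap, and it sinks the planar strategy as a whole: the key inequality $g-1\le \deg L(-1)+O(1)\le m+O(n)$ is false. Determinantal complexity can only drop under restriction, and the restricted polynomial has complexity at most $n$. Since $\bar C$ is a smooth plane curve of degree $n$ (as you yourself arrange), Dixon's classical theorem gives $A_0,A_1,A_2\in\CC^{n\times n}$ with $\tilde f(t,y,z)=\det(tA_0+yA_1+zA_2)$. Setting $t=1$ yields an $n\times n$ pencil with determinant $f|_P$. This pencil has every property your argument uses: corank exactly one along $\bar C$, the resolution $0\to\mathcal{O}(-1)^n\to\mathcal{O}^n\to\mathcal{F}\to 0$, $h^0(\mathcal{F})=n$, and vanishing cohomology of $\mathcal{F}(-1)$. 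It also has no component at infinity at all, yet $m=n$ while $g=\binom{n-1}{2}$. So the obstacle you anticipated, degree leaking to the line $t=0$, is not the real problem.

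The real error is in the Chern class bookkeeping. $c_1(\mathcal{F})=mH$ only records the support of $\mathcal{F}$ with multiplicities, i.e.\ the degree of the determinant, and says nothing about the degree of the line bundle on $\bar C$. What $m$ does control is $\chi(\mathcal{F})=m$, and Riemann--Roch turns this into $\deg L=\chi(L)+g-1$. The genus therefore ends up on the side of $\deg L$ instead of being bounded by $m$; for Dixon's pencil, $\mathcal{F}=L$ has degree $g-1+n$. Your fallback B\'ezout count fails for the same reason. A generic combination of adjugate entries has degree up to $m-1$, so it has up to $n(m-1)$ zeros on $\bar C$, which only yields $g\lesssim nm$, i.e.\ $m=\Omega(n)$.

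What is missing is a way to exploit the $n$-dimensional structure, since bivariate restrictions can never certify more than $m\ge n$. The paper keeps all $n$ variables. The system $F=0$, $x_{n-1}=1$, $\partial_iF=\partial_{n-1}F$ ($i\le n-2$) has at least $(n-1)^{n-2}$ solutions, and at each of them $\rank M(\va)=m-1$. Jacobi's formula then gives $\partial_iF(\va)=\lambda\,\vu^TM_i\vv$, with $\vu,\vv$ spanning the left and right kernels. The solutions therefore lift bijectively to a square system in $(\vx,\vu,\vv,z)$ whose equations have degree at most one in each of at most two variable blocks. The multihomogeneous B\'ezout number of that system is at most $2^{n-2}\binom{2m-1}{n-1}$, and comparing this with $(n-1)^{n-2}$ forces $m=\Omega(n^2)$. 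Both sides of that comparison depend on having $n$ independent $\vx$-directions, which is exactly what a planar section throws away.
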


\paragraph*{Acknowledgements.} We are thankful to Abhranil Chatterjee, Ramprasad Saptharishi and Shubhangi Saraf for many insightful discussions on determinantal complexity over the years. Many thanks to Amir Shpilka for his sanguine advice on why it was important to write and share this proof in this rapidly evolving and utterly confusing world of theory research in the age of AI.

\section{Intuition and overview} \label{sec:overview}
The proof proceeds by constructing a system of polynomial equations with finitely many solutions, and bounding the size of this set from above and below. We first prove a lower bound on the size of this set using the properties of the polynomial $F(\vx) := \sum_{i=1}^n x_i^n$, and then prove an upper bound on the size of this set using the properties of the determinantal representation and Bézout's theorem. Comparing these bounds immediately gives us an $\Omega(n\log n)$ lower bound on the size of the representation. To extend this bound to $\Omega(n^2)$, we replace the use of the vanilla Bézout's theorem by a stronger version that uses some structural properties of the polynomials in the polynomial system.

\paragraph*{Strassen's lower bound: }
At a high level, the proof of \cref{thm:main} is motivated by and is similar to Strassen \cite{Strassen73} and Baur-Strassen's \cite{BS83} proof of super-linear lower bounds for general algebraic circuits. The idea there is to consider the system of polynomial equations given by
\[
\frac{\partial F}{\partial x_i}(\vx) = 1,  \quad i \in [n]
\]
and count the number of solutions to this system. From the structure of the polynomial, the set of solutions to this system is finite and is in fact of size at least $\exp\left(\Omega(n\log n)\right)$. The key conceptual idea is to show that if $F$ is computable by an algebraic circuit of size $s$, then this system can have at most $\exp(O(s))$ solutions. Comparing these bounds gives us an $\Omega(n\log n)$ bound on $s$. 

The upper bound on the set of solutions is obtained by showing that the size of this set is equal to the size of the set of solutions of another system of polynomial equations (that we refer to as the \emph{proxy system}, in contrast to the \emph{original system}), this time on $(n+s)$ variables and $O(s)$ equations, where every equation has degree at most $2$. The classical Bézout theorem implies an $\exp(O(s))$ upper bound on the size of the solution set. The crucial facts used here were 
\begin{itemize}
    \item the new system had polynomials of degree at most a constant independent of $n$,
    \item the number of equations was at most $O(s)$,
    \item and the number of solutions was finite and equal to the number of solutions in the original system.
\end{itemize}

\paragraph*{Working with determinantal representation: }The high level strategy in the proof of \cref{thm:main} is to understand whether the above approach can be adapted when we have a small determinantal representation for $F$, instead of a small circuit.

A natural first attempt is to start with the same polynomial system as in Strassen's proof, where we set all the first-order partial derivatives of $\F(\vx)$ to $1$. The key technical issue in analyzing this system is that we do not know of a way of upper bounding the size of the solution set using a proxy system based on the determinantal representation. However, it turns out that a related system of equations, again based on the first-order partial derivatives of $F$, is easier to work with. The intuition that directs us towards this system is incidentally an elementary linear algebraic fact that follows from Jacobi's formula.  Let $M$ be an $m\times m$ matrix whose entries are linear forms in $\vx$ such that $F(\vx) = \det(M(\vx))$. Furthermore, let $M_0, M_1, \dots, M_n \in \CC^{m\times m}$ be constant matrices such that $M = M_0 + \sum_i x_iM_i$. Then, the following lemma is true. 

\begin{lemma}\label[lemma]{lem:intuition-1}
Let $\va \in \CC^n$ be such that  rank of $M(\va)$ equals $(m-1)$. Let $\vu, \vv$ be arbitrary non-zero vectors in $\CC^m$ such that $\vu^T$ is in the left kernel of $M(\va)$ and $\vv$ is in the right kernel of $M(\va)$. 

Then, there exists a non-zero constant $\lambda \in \CC$ such that for every $i \in [n]$, 
\[
\frac{\partial F}{\partial x_i} (\va) = \lambda \vu^T M_i \vv . 
\]
\end{lemma}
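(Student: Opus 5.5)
We will derive the identity from Jacobi's formula combined with the structure of the adjugate of a corank-one matrix. First, since $F(\vx) = \det(M(\vx))$ and $M(\vx) = M_0 + \sum_i x_i M_i$, Jacobi's formula for the derivative of a determinant along a curve gives, for every $i \in [n]$,
\[
\frac{\partial F}{\partial x_i}(\va) = \mathrm{tr}\left(\adj(M(\va)) \cdot \frac{\partial M}{\partial x_i}\right) = \mathrm{tr}\left(\adj(M(\va))\, M_i\right),
\]
because $\partial M/\partial x_i = M_i$ is constant. This step is routine. If one prefers to avoid quoting Jacobi's formula, one can expand $\det(M(\va) + t M_i)$ to first order in $t$ via cofactor expansion along each column.

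The second step, which is the heart of the matter, is to pin down $\adj(M(\va))$ when $\rank(M(\va)) = m-1$. We will use the identities $M(\va)\adj(M(\va)) = \adj(M(\va))M(\va) = \det(M(\va)) I = 0$. The first identity shows that every column of $\adj(M(\va))$ lies in the right kernel of $M(\va)$. This kernel is one-dimensional and spanned by $\vv$, so $\adj(M(\va)) = \vv \vw^T$ for some $\vw$. The second identity shows that every row lies in the left kernel, spanned by $\vu^T$, so $\vw$ must be a multiple of $\vu$. Together these give $\adj(M(\va)) = \lambda \vv\vu^T$ for some scalar $\lambda$.

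To see that $\lambda \neq 0$, note that a rank-$(m-1)$ matrix has a nonzero $(m-1)\times(m-1)$ minor. That minor is (up to sign) an entry of the adjugate, so $\adj(M(\va)) \neq 0$. Here we use that $\vu, \vv$ are nonzero, which is needed for the rank-one factorization to be meaningful. The main care point is this rank-one description of the adjugate; it is exactly where the hypothesis $\rank M(\va) = m-1$ enters, since for rank at most $m-2$ the adjugate vanishes.

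Finally, substituting into the trace formula and using cyclicity of the trace gives
\[
\frac{\partial F}{\partial x_i}(\va) = \lambda\, \mathrm{tr}(\vv\vu^T M_i) = \lambda\, \vu^T M_i \vv.
\]
The constant $\lambda$ depends only on $\va$, $\vu$, $\vv$ and not on $i$, which is what the lemma asserts.
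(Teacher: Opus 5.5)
Your proof is correct and follows essentially the same route as the paper. Like the paper, you combine Jacobi's formula (\cref{clm:derivatives-from-adjugate}) with the rank-one structure $\adj(M(\va)) = \lambda \vv\vu^T$, $\lambda \neq 0$ (\cref{clm:adjugate-structure}), and then evaluate the trace as in \cref{cor:final-cor}; the only cosmetic differences are that you use cyclicity of the trace where the paper expands indices, and you sketch Jacobi's formula by a first-order expansion rather than the permutation expansion.
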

The key takeaway from \cref{lem:intuition-1} is that at points $\va \in \CC^n$ where the rank of $M(\va)$ equals $(m-1)$, the first order partial derivatives of $F$ do indeed behave like a degree $2$ polynomial in a new set of auxiliary variables $\vu, \vv$ coming from the left and right kernel spaces of $M(\va)$. Note that we have $2m$ such variables, corresponding to the coordinates of $\vu, \vv$, and thus the size of the determinantal representation makes an appearance. So, we modify the original polynomial system involving setting all first order derivatives to $1$, in order to ensure that we are dealing with $\va$ such that $M(\va)$ has rank exactly $(m-1)$, and then consider the $2m$ auxiliary variables $\vu, \vv$ and get a proxy system involving them. 

The original polynomial system that we work with is given by the following definition.

\begin{definition}\label[definition]{defn:system-of-eqn}
Let $\F$ be the $n$-variate polynomial $\sum_{i=1}^n x_i^n$. Let $\mathcal{S} \subseteq \CC^n$ be the set of solutions to the system of equations given by 
\begin{enumerate}
    \item $F(\vx) = 0$
    \item $x_{n-1} = 1$
    \item For every $i \leq (n-2)$, $\frac{\partial F}{\partial x_i} = \frac{\partial F}{\partial x_{n-1}}$
\end{enumerate}
\end{definition}
This system clearly has a finite number of solutions, and this number is $\exp(\Theta(n\log n))$. 
We observe that the first two conditions ensure that any solution $\va$ to this system satisfies $M(\va)$ has rank \emph{equal} to $(m-1)$. This is shown in \cref{clm:rank-of-M}. This implies the existence of non-zero vectors in the left and the right kernels (which are one-dimensional spaces) of $M(\va)$. Using these vectors, \cref{lem:intuition-1} tells us that constraints on the first order derivatives of $F$ in the system can be written as degree-$2$ constraints in the coordinates of $\vu, \vv$. The proxy system that we construct in variables $\vx, \vu, \vv$ captures this in a natural way. We need to impose a few more constraints in this system to ensure that the number of solutions of the proxy system is finite. The final description of this system is in \cref{defn:proxy-system-eqn}. 

This system has $O(n+m)$ polynomial constraints in it, each of which has constant degree. Thus, by Bézout's theorem, this system has at most $\exp(O(n+m))$ solutions. Comparing this to the lower bound of $\exp(\Theta(n\log n))$ on the number of solutions of the original system in \cref{defn:system-of-eqn} immediately gives us a lower bound of $\Omega(n\log n)$ on $m$. We then observe that this bound can be improved further for the proxy-system we have since the equations in the system happen to be (essentially) set multilinear with respect to the variables $\vx, \vu, \vv$. This allows us to use a version of Bézout's theorem for such systems (\cref{thm:mulhom-bezout}) that gives a better bound on the number of solutions, leading to the proof of \cref{thm:main}. 

We provide the details in the next section. 

\section{The formal proof}
\label{sec:formal-details}


\subsection{A lower bound from the polynomial} 
Recall that we denote by $\mathcal{S}$ the set of solutions to the system of polynomial equations in \cref{defn:system-of-eqn}.
We first observe that for all $n\geq 2$, the set $\mathcal{S}$ is finite, and compute a lower bound on its size.

\begin{claim}
\label[claim]{cl:S-finite}
The set of solutions $\mathcal{S}$ to the polynomial system in \cref{defn:system-of-eqn} is finite, and $|\mathcal{S}| \ge (n-1)^{n-2}$.
\end{claim}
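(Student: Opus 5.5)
The plan is to write the system out explicitly and solve it by hand. Since $\frac{\partial F}{\partial x_i} = n x_i^{n-1}$, the third condition in \cref{defn:system-of-eqn} says $x_i^{n-1} = x_{n-1}^{n-1}$ for every $i \le n-2$. Combined with $x_{n-1} = 1$, this becomes $x_i^{n-1} = 1$ for $i \le n-2$. So each of $x_1,\ldots,x_{n-2}$ is an $(n-1)$-th root of unity, which gives exactly $(n-1)^{n-2}$ choices for the tuple $(x_1,\ldots,x_{n-2})$.

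It remains to handle $x_n$, which is determined by the first condition $F(\vx)=0$. Writing $\omega_i := x_i$ for the roots of unity, this condition reads
\[
x_n^n = -\sum_{i=1}^{n-2}\omega_i^n - 1 = -\sum_{i=1}^{n-2}\omega_i - 1,
\]
using $\omega_i^n = \omega_i\cdot\omega_i^{n-1} = \omega_i$. For each fixed choice of $(\omega_1,\ldots,\omega_{n-2})$, this is a single univariate equation $x_n^n = c$ with $c$ a constant. It has at least one and at most $n$ solutions $x_n \in \CC$: exactly one (namely $0$) if $c=0$, and $n$ distinct ones if $c\neq 0$.

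These two observations give both parts of the claim. For finiteness, $|\mathcal{S}| \le n(n-1)^{n-2}$. For the lower bound, each of the $(n-1)^{n-2}$ choices of the first $n-2$ coordinates extends to at least one point of $\mathcal{S}$. Distinct choices give distinct points, since they differ in the first $n-2$ coordinates. Hence $|\mathcal{S}| \ge (n-1)^{n-2}$.

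There is no real obstacle here. The only point needing care is the existence of at least one $x_n$ in the case $c=0$, and $x_n=0$ handles it. The computation also needs $n \ge 2$, so that $x_{n-1}$ is a genuine variable distinct from $x_n$.
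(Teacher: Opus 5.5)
Your proof is correct and follows the paper's argument essentially verbatim. Both reduce the derivative conditions to $x_i^{n-1}=1$ for $i\le n-2$, count the $(n-1)^{n-2}$ choices of the first $n-2$ coordinates, and observe that $x_n$ then has between $1$ and $n$ admissible values. Your explicit computation of the constant $c=-\sum_{i=1}^{n-2}\omega_i-1$ is just an extra detail the paper does not bother with.
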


\begin{proof}
From the definition of $F$, we get that $\frac{\partial F}{\partial x_i} = n\cdot x_i^{n-1} $. So, the system of equations stated in \cref{defn:system-of-eqn} is just 
\[
F(\vx) = 0, \quad x_{n-1} = 1
\]
and 
\[
x_i^{n-1} = 1, \quad i \in [n-2]
\]
The $(n-2)$ equations based on the derivatives clearly give us $(n-1)^{n-2}$ distinct settings of values of $x_1, x_2, \dots, x_{n-2}$. The variable $x_{n-1}$ is forced to be one, and for each of these settings of the first $(n-1)$ variables,  the variable $x_{n}$ has at least one and at most $n$ possible assignments that satisfy the first equation $F(\vx) = 0$. Thus, we get that the set $\mathcal{S}$ is finite and
\[
|\mathcal{S}| \geq (n-1)^{n-2}. \qedhere
\]
\end{proof}

\subsection{An upper bound from the determinantal representation}
We now use the $m\times m$ determinantal representation $M(\vx)$ of $F$ to prove an upper bound on the size of $\mathcal{S}$. To this end, we first construct an alternative system of polynomial equations based on $M(\vx)$ such that the number of solutions of the new system is an upper bound on the size of $\mathcal{S}$. We start by making some simple linear algebraic claims. 

\begin{claim}\label[claim]{clm:adjugate-structure}
Let $\va \in \CC^n$ be such that $\rank(M(\va))=m-1$.

For any non-zero vectors $\vu = \vu(\va), \vv = \vv(\va)$ in $\CC^m$ such that $\vu^{T}$ is in the left kernel of $M(\va)$, $\vv$ is in the right kernel of $M(\va)$, there is a non-zero constant $\lambda$ such that 
    \[
    \adj(M(\va)) = \lambda \vv\vu^{T}  .  
    \]
\end{claim}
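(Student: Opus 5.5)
We will use the two defining identities of the adjugate together with the fact that a rank-$(m-1)$ matrix has one-dimensional left and right kernels. Write $A := M(\va)$ and $B := \adj(A)$. The standard identities are $AB = BA = \det(A) I_m$. Since $\rank(A) = m-1 < m$, we have $\det(A)=0$, so $AB = 0$ and $BA = 0$.

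First, we show that $B \neq 0$. Since $\rank(A) = m-1$, some $(m-1)\times(m-1)$ minor of $A$ is non-zero. The entries of $B$ are, up to sign, exactly the $(m-1)\times(m-1)$ minors of $A$, so some entry of $B$ is non-zero.

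Next, we pin down the columns and rows of $B$. The equation $AB = 0$ says every column of $B$ lies in the right kernel of $A$. By rank--nullity this kernel has dimension $1$, so it equals $\Span{\vv}$, since $\vv$ is a non-zero element of it. Hence every column of $B$ is a scalar multiple of $\vv$, i.e.\ $B = \vv \vc^T$ for some $\vc \in \CC^m$. Symmetrically, $BA = 0$ says every row of $B$ lies in the left kernel of $A$, which is one-dimensional and spanned by $\vu^T$. So $B = \vd \vu^T$ for some $\vd \in \CC^m$.

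Finally, we combine the two forms. From $\vv\vc^T = \vd\vu^T$ with $B \ne 0$, we get $\vc \neq 0$ and $\vd \neq 0$. Pick an index $j$ with $u_j \neq 0$. Comparing the $j$-th columns gives $c_j \vv = u_j \vd$, so $\vd = (c_j/u_j)\vv$. Setting $\lambda := c_j/u_j$, we obtain $B = \lambda \vv\vu^T$. Here $\lambda \ne 0$, since otherwise $B = 0$.

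None of these steps is a real obstacle; the only point requiring care is the non-vanishing of $\adj(A)$, which uses that the rank is exactly $m-1$ rather than just at most $m-1$.
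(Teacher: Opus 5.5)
Your proof is correct and follows the same route as the paper: $AB = BA = \det(A) I_m = 0$ puts every column of $\adj(A)$ in $\Span{\vv}$ and every row in $\Span{\vu}$, and $\lambda \neq 0$ because rank exactly $m-1$ makes $\adj(A)$ non-zero. You also spell out two steps the paper only asserts: how the row and column constraints combine into a single scalar $\lambda$, and why $\adj(A) \neq 0$, namely that some $(m-1)\times(m-1)$ minor is non-zero.
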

\begin{proof}
By the assumption, any non-zero $\vu$ in the left kernel is a basis for the kernel, and any non-zero $\vv$ in the right kernel is a basis for the right kernel. 

By the definition of adjugate of a matrix, we get that
\[
\adj(M(\va))\cdot M(\va) = M(\va) \cdot \adj(M(\va)) = \det(M(\va))\cdot I_m, 
\]
where $I_m$ is the $m\times m$ identity matrix. Since $\det(M(\va)) = 0$, every column of $\adj(M(\va))$ is in the right kernel of $M(\va)$. Similarly, every row is in the left kernel. By the assumption, the left and right kernels of $M(\va)$ are one-dimensional spaces. We therefore get that the rows of $\adj(M(\va))$ are in the span of $\vu$ and the columns are in the span of $\vv$. Thus, there exists a non-zero constant $\lambda$ such that $\adj(M(\va)) = \lambda \vv \vu^{T}$. Here, the non-zeroness of $\lambda$ follows from the fact that the rank of $M(\va)$ equals $(m-1)$, and hence, $\adj(M(\va))$ is non-zero. 
\end{proof}

The following classical claim is known as Jacobi's formula. For completeness, we provide a direct proof.

\begin{claim}[Jacobi's formula]\label[claim]{clm:derivatives-from-adjugate}
For every $i$, 
\[
\frac{\partial F(\vx)}{\partial x_i} = \mathrm{Trace}(\adj(M(\vx)) \cdot M_i). 
\]
\end{claim}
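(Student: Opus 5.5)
We prove Jacobi's formula directly, by multilinearity of the determinant in the columns of $M(\vx)$. Write $M(\vx) = M_0 + \sum_j x_j M_j$, so each entry $M(\vx)_{k\ell}$ is an affine linear form in $\vx$ and $\frac{\partial}{\partial x_i} M(\vx)_{k\ell} = (M_i)_{k\ell}$. Since $F(\vx) = \det(M(\vx))$ is a polynomial in the entries of $M(\vx)$, the chain rule gives
\[
\frac{\partial F}{\partial x_i}(\vx) = \sum_{k,\ell=1}^m \frac{\partial \det}{\partial Y_{k\ell}}\bigl(M(\vx)\bigr)\cdot (M_i)_{k\ell},
\]
where $\det$ is viewed as a polynomial in a generic matrix of indeterminates $Y=(Y_{k\ell})$.

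The only step that needs care is the identity
\[
\frac{\partial \det(Y)}{\partial Y_{k\ell}} = (-1)^{k+\ell}\det(Y^{(k,\ell)}) = \adj(Y)_{\ell k},
\]
where $Y^{(k,\ell)}$ is $Y$ with row $k$ and column $\ell$ deleted. To see this, Laplace-expand $\det(Y)$ along row $k$: $\det(Y) = \sum_{\ell'} (-1)^{k+\ell'} Y_{k\ell'}\det(Y^{(k,\ell')})$. None of the minors $\det(Y^{(k,\ell')})$ involves any entry of row $k$. Hence $\det(Y)$ is affine in $Y_{k\ell}$, and the coefficient of $Y_{k\ell}$ is exactly the $(k,\ell)$ cofactor. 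By definition of the adjugate (the transpose of the cofactor matrix), this cofactor is $\adj(Y)_{\ell k}$. Since this is an identity of polynomials, we may substitute $Y = M(\vx)$.

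Plugging in, we get
\[
\frac{\partial F}{\partial x_i}(\vx) = \sum_{k,\ell} \adj(M(\vx))_{\ell k}(M_i)_{k\ell} = \sum_\ell \bigl(\adj(M(\vx))\, M_i\bigr)_{\ell\ell} = \mathrm{Trace}\bigl(\adj(M(\vx))\cdot M_i\bigr),
\]
which is the claim. I expect no real obstacle; the only thing to watch is the index transposition in the adjugate, which is exactly what turns the entrywise sum into a trace of a product.
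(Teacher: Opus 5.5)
Your proof is correct and is essentially the paper's argument: both reduce $\partial F/\partial x_i$ to $\sum_{k,\ell} (M_i)_{k\ell}\,\adj(M(\vx))_{\ell k}$ by differentiating entry by entry and identifying the coefficient of each entry as the corresponding cofactor. The only difference is cosmetic: the paper obtains the cofactor by grouping the permutations with $\sigma(j)=j'$ in the Leibniz expansion, whereas you apply the chain rule through a generic matrix and use a Laplace expansion along row $k$.
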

\begin{proof}
Since $F(\vx) = \det(M(\vx))$, expanding the determinant in terms of the monomials, and differentiating with respect to $x_i$, we get that 
\[
\frac{\partial F}{\partial x_i} = \sum_{\sigma \in S_{m}} (-1)^{\mathrm{sign}(\sigma)}\left(\sum_{j=1}^m \frac{\partial (M(\vx)_{j, \sigma(j)})}{\partial x_i} \prod_{\ell \neq j} M(\vx)_{\ell, \sigma(\ell)}\right) . 
\]
Here $S_m$ is the set of permutations on $m$ elements. 
Rearranging the summation, we get 
\[
\frac{\partial F}{\partial x_i} = \sum_{j, j'} \frac{\partial (M(\vx)_{j, j'}}{\partial x_i} \left( \sum_{\sigma \in S_{m}: \sigma(j) = j'} (-1)^{\mathrm{sign}(\sigma)}  \prod_{\ell \neq j} M(\vx)_{\ell, \sigma(\ell)}\right) . 
\]
Now, the internal summation within the brackets is really just equal to $\adj(M(\vx))_{j',j}$, and the partial derivative $\frac{\partial (M(\vx)_{j, j'}}{\partial x_i}$ equals $(M_i)_{j,j'}$ since the entries of $M$ have degree at most one in $\vx$. Thus, we get 
\[
\frac{\partial F}{\partial x_i} = \sum_{j,j'}(M_i)_{j,j'}\cdot \adj(M(\vx))_{j',j} , 
\]
which equals the trace of the matrix $(\adj(M)M_i)$ as claimed. 
\end{proof}
An immediate consequence of \cref{clm:adjugate-structure,clm:derivatives-from-adjugate} is the following useful corollary. 
\begin{corollary}\label[corollary]{cor:final-cor}
For every $\va$ such that $\rank(M(\va))=m-1$, and non-zero vectors $\vu, \vv$ such that $\vu^{T}$ is in the left kernel of $M(\va)$, $\vv$ is in the right kernel of $M(\va)$, the following is true: there exists a non-zero constant $\lambda$ such that  for every $i \in [n]$,
\[
\frac{\partial F}{\partial x_i}(\va) = \lambda \vu^T M_i \vv .
\]
\end{corollary}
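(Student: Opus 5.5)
The corollary follows by combining \cref{clm:adjugate-structure} with \cref{clm:derivatives-from-adjugate}, evaluated at the point $\va$.

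Fix $\va$ with $\rank(M(\va)) = m-1$, and fix non-zero $\vu, \vv$ with $\vu^T M(\va) = 0$ and $M(\va)\vv = 0$. Apply \cref{clm:adjugate-structure} to obtain a non-zero $\lambda$ such that $\adj(M(\va)) = \lambda \vv \vu^T$. The important point is that $\lambda$ depends only on $\va$, $\vu$ and $\vv$, and not on the index $i$.

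Next, the identity in \cref{clm:derivatives-from-adjugate} holds as a polynomial identity in $\vx$. Both sides are polynomials: the entries of $\adj(M(\vx))$ are $(m-1)\times(m-1)$ minors of $M(\vx)$. So the identity can be evaluated at $\vx = \va$, which gives, for each $i \in [n]$,
\[
\frac{\partial F}{\partial x_i}(\va) = \mathrm{Trace}(\adj(M(\va)) M_i) = \lambda\, \mathrm{Trace}(\vv \vu^T M_i).
\]

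Finally, use the cyclic property of the trace, $\mathrm{Trace}(\vv \vu^T M_i) = \mathrm{Trace}(\vu^T M_i \vv)$. The right-hand side is the trace of a $1\times 1$ matrix, so it equals the scalar $\vu^T M_i \vv$. This yields $\frac{\partial F}{\partial x_i}(\va) = \lambda \vu^T M_i \vv$ for all $i$, with the same non-zero $\lambda$.

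None of these steps is a real obstacle. The only points that need care are that a single $\lambda$ works uniformly for all $i$, and that \cref{clm:derivatives-from-adjugate} is a polynomial identity and hence may be specialised to $\va$.
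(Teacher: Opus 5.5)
Your proposal is correct and matches the paper's proof: apply \cref{clm:adjugate-structure} to get $\adj(M(\va)) = \lambda \vv\vu^T$, substitute into Jacobi's formula at $\va$, and reduce the trace to $\lambda\,\vu^T M_i \vv$. The one difference is cosmetic: you use the cyclic property of the trace, while the paper expands $\mathrm{Trace}(\vv\vu^T M_i)$ entrywise as $\sum_{j,k} v_j u_k (M_i)_{k,j}$.
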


\begin{proof}
We have that
\[
\frac{\partial F}{\partial x_i}(\va) = \mathrm{Trace}(\adj(M(\va)) \cdot M_i) = \mathrm{Trace}( \lambda \vv \vu^T \cdot M_i).
\]
The first equality is by \cref{clm:derivatives-from-adjugate}, and the second by \cref{clm:adjugate-structure}. Further
\[
(\vv \vu^T \cdot M_i)_{j,j} = \sum_{k=1}^m (\vv \vu^T)_{j,k} (M_i)_{k,j} = \sum_{k=1}^m \vv_j \vu_k (M_i)_{k,j} 
\]
implying that
\[
\mathrm{Trace}( \lambda \vv \vu^T \cdot M_i) = \lambda \sum_{j=1}^m (\vv \vu^T \cdot M_i)_{j,j} = \lambda \sum_{j=1}^m \sum_{k=1}^m \vv_j \vu_k (M_i)_{k,j}  = \lambda \vu^T M_i \vv. \qedhere
\]
\end{proof}

We now give a simple condition that ensures that the rank of $M(\va)$ is $m-1$.

\begin{claim}\label[claim]{clm:rank-of-M}
Let $\va \in \CC^n$ be such that $F(\va) $ equals zero, but at least one of the first order partial derivatives of $F$ is non-zero at $\va$. Then, the rank of the matrix $M(\va)$ \emph{equals} $(m-1)$. 

In particular, for every $\va \in \mathcal{S}$, $\rank(M(\va))=m-1$.
\end{claim}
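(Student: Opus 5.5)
The plan is to sandwich $\rank(M(\va))$ between $m-1$ and $m-1$, using the determinant for the upper bound and Jacobi's formula for the lower bound.

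\emph{Upper bound.} Since $\det(M(\va)) = F(\va) = 0$, the matrix $M(\va)$ is singular. Hence $\rank(M(\va)) \le m-1$.

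\emph{Lower bound.} I will use \cref{clm:derivatives-from-adjugate}. Evaluating it at $\va$ gives, for every $i$,
\[
\frac{\partial F}{\partial x_i}(\va) = \mathrm{Trace}(\adj(M(\va))\cdot M_i).
\]
By hypothesis some $\frac{\partial F}{\partial x_i}(\va)$ is non-zero. Its trace therefore cannot vanish, so $\adj(M(\va))$ is not the zero matrix. The entries of the adjugate are, up to sign, the $(m-1)\times(m-1)$ minors of $M(\va)$. So some $(m-1)\times(m-1)$ minor is non-zero, which gives $\rank(M(\va)) \ge m-1$. Combining the two bounds, $\rank(M(\va)) = m-1$.

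\emph{The ``in particular'' part.} Let $\va \in \mathcal{S}$. The first equation of \cref{defn:system-of-eqn} gives $F(\va)=0$. It remains to find one non-zero derivative. Since $\frac{\partial F}{\partial x_{n-1}} = n x_{n-1}^{n-1}$ and $a_{n-1}=1$, we get $\frac{\partial F}{\partial x_{n-1}}(\va) = n \neq 0$. The first part of the claim then applies.

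There is no real obstacle. The only point needing care is the step from ``the trace is non-zero'' to ``the adjugate is non-zero'', which is immediate, and the identification of adjugate entries with minors is standard.
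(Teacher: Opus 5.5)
Your proof is correct and takes essentially the same route as the paper. The paper's ``chain rule'' step, which writes each $\frac{\partial F}{\partial x_i}$ as a linear combination of the $(m-1)\times(m-1)$ minors, is exactly the Jacobi formula of \cref{clm:derivatives-from-adjugate} that you invoke; you argue the contrapositive directly rather than by contradiction, and the ``in particular'' part via $\frac{\partial F}{\partial x_{n-1}}(\va) = n \neq 0$ matches the paper.
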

\begin{proof}
Since $F(\va) = 0$, we get that $F(\va) = \det(M(\va)) = 0$. So, the rank of $M(\va)$ is at most $(m-1)$. We need to argue that it is exactly $(m-1)$. 

Suppose the rank of $M(\va)$ is at most $(m-2)$. In this case, by the chain rule, every first order partial derivative of $F$ can be written as a linear combination of the $(m-1)\times (m-1)$ determinantal minors of $M(\vx)$, which would all be zero when evaluated at $\va$, which would be a contradiction. 
Thus, $\rank(M(\va))=m-1$.

Finally, if $a \in \mathcal{S}$, $F(\va) = 0$ by definition (recall \cref{defn:system-of-eqn}), and $\va_{n-1} = 1$ implies $\frac{\partial F}{\partial x_{n-1}}(\va) \neq 0$.
\end{proof}

The following claim will enable us to ensure uniqueness of solutions to the proxy polynomial system.

\begin{claim}\label[claim]{clm:uniqueness-constraints-u-v}
Let $U = \cup_{\va \in \mathcal{S}} \{\vu : \vu \neq \mathbf{0}, \vu^TM(\va) = \mathbf{0}\}$ and $V = \cup_{\va \in \mathcal{S}} \{\vv : \vv \neq \mathbf{0}, M(\va)\vv = \mathbf{0}\}$. 

Then, there exist vectors $\vg, \vh \in \CC^m$ such that for every $\vu \in U$ and $\vv \in V$, $\vg^T \vu \neq 0$ and $\vh^T \vv \neq 0$. 
\end{claim}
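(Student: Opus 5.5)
The plan is to reduce the statement to avoiding finitely many hyperplanes. By \cref{cl:S-finite}, $\mathcal{S}$ is finite. By \cref{clm:rank-of-M}, for every $\va \in \mathcal{S}$ the matrix $M(\va)$ has rank exactly $m-1$. Hence its left kernel is a one-dimensional space, spanned by some nonzero $\vu_{\va}$, and its right kernel is spanned by some nonzero $\vv_{\va}$. Consequently
\[
U = \bigcup_{\va\in\mathcal{S}} \{c\,\vu_{\va} : c \in \CC\setminus\{0\}\}, \qquad V = \bigcup_{\va\in\mathcal{S}} \{c\,\vv_{\va} : c \in \CC\setminus\{0\}\}.
\]
So $U$ is a finite union of punctured lines through the origin, and likewise for $V$. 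Since $\vg^T(c\vu_{\va}) = c\,\vg^T\vu_{\va}$, it suffices to find $\vg$ with $\vg^T\vu_{\va}\neq 0$ for the finitely many representatives $\vu_{\va}$, $\va\in\mathcal{S}$. The same reduction applies to $\vh$ and the $\vv_{\va}$.

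For each $\va\in\mathcal{S}$, the set of $\vg\in\CC^m$ with $\vg^T\vu_{\va}=0$ is a proper linear subspace (a hyperplane), because $\vu_{\va}\neq 0$. A finite union of proper subspaces of $\CC^m$ cannot cover $\CC^m$, since $\CC$ is infinite. A concrete way to see this: the polynomial $P(\vg)=\prod_{\va\in\mathcal{S}} \vg^T\vu_{\va}$ is a product of nonzero linear forms, hence a nonzero polynomial, so it has a non-root over the infinite field $\CC$. Alternatively, one can take $\vg=(1,t,t^2,\dots,t^{m-1})$ and note that each $\vg^T\vu_{\va}$ is a nonzero univariate polynomial in $t$, so only finitely many $t$ are bad. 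Choosing $\vg$ outside the union gives $\vg^T\vu\neq 0$ for all $\vu\in U$. The same argument applied to the $\vv_{\va}$ produces $\vh$.

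There is no real obstacle here. The one point to be careful about is that $U$ and $V$ are infinite sets, so the argument must pass through the one-dimensionality of the kernels from \cref{clm:rank-of-M} together with homogeneity in $c$. This is what reduces the condition to finitely many nonzero linear forms, each defining a hyperplane. The finiteness of $\mathcal{S}$ from \cref{cl:S-finite} is used essentially, since infinitely many hyperplanes could cover $\CC^m$.
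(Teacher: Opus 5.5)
Your proposal is correct and follows essentially the same route as the paper: you use \cref{clm:rank-of-M} to get one-dimensional kernels spanned by $\vu_{\va}, \vv_{\va}$, then finiteness of $\mathcal{S}$, then take $\vg$ (resp.\ $\vh$) to be a non-root of the nonzero product $\prod_{\va\in\mathcal{S}} \vy^T\vu_{\va}$. Your moment-curve alternative and your remark that one-dimensionality is essential are correct additions, but they do not change the argument.
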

\begin{proof}
By \cref{clm:rank-of-M}, for every $\va \in \mathcal{S}$, the set $U_{\va} := \{\vu : \vu^T M(\va) = \mathbf{0}\}$ is a one-dimensional subspace spanned by a non-zero vector $\vu_a$. Therefore, if $\vg$ is picked such that  $ \vg^T \vu_a \neq 0$ then $ \vg^T \vu \neq 0$ for every non-zero $\vu \in U_{\va}$.
    
    Since $\mathcal{S}$ is a finite set, we can pick $\vg$ to be a non-root of the non-zero polynomial $P(\vy) = \prod_{\va \in \mathcal{S}} \vy^T  \vu_a $ (in variables $\vy$).
    
    The argument for $\vh$ and $V$ is identical.
\end{proof}

Based on these claims, we are now ready to define the new system of equations.

\begin{definition} \label[definition]{defn:proxy-system-eqn}
    This new system of equations $\mathcal{E}$ has three tuples of variables, $\vu = (u_1,\dots, u_m)$, $\vv = (v_1,\dots, v_m)$ and $\vx = (x_1, \dots, x_n)$, and $2m+(n+1)$ polynomial equations. The equations are as follows. 
    \begin{enumerate}
        \item \label{item:left-kernel} The vector $\vu$ is in the left kernel of $M(\vx)$, i.e. $\vu^{T}M(\vx) = 0$. Note that this corresponds to $m$ polynomial equations, each of which has degree at most one in $\vx$ and at most one in $\vu$.
        \item \label{item:right-kernel} The vector $\vv$ is in the right kernel of $M(\vx)$, i.e. $M(\vx)\vv = 0$. This corresponds to $m$ polynomial equations, each of which has degree at most one in $\vx$ and at most one in $\vv$.
        \item \label{item:partial} For every $i \leq (n-2)$, $\vu^{T}M_i\vv = \vu^{T}M_{n-1}\vv$. Thus, there are $(n-2)$ equations of this kind, each being bilinear in $\vu$ and $\vv$. 
        \item \label{item:xn-1}  $x_{n-1} = 1$
        \item \label{item:u-nonzero} $ \vg^T \vu = 1$
        \item \label{item:v-nonzero} $ \vh^T \vv = 1$
    \end{enumerate}
    The vectors $\vg, \vh$ are picked as in \cref{clm:uniqueness-constraints-u-v}.
    
    Let $\mathcal{T} \subseteq \CC^{2m+n}$ denote the set of solutions of the system of equations defined above. 
\end{definition}
We represent points in $\mathcal{T}$ as a three-tuple $(\va,\vb,\vc)$ of vectors, where $\va \in \CC^n$ is the assignment to $\vx$-variables, $\vb$ is the assignment to $\vu$-variables and $\vc$ is the assignment to $\vv$ variables.

The next few claims help us understand the structure of $\mathcal{T}$ better and help us relate it to $\mathcal{S}$. 

\begin{claim}\label[claim]{clm:sols-projections}
If $(\va,\vb,\vc) \in \mathcal{T}$, then $\va$ is in $\mathcal{S}$. 
\end{claim}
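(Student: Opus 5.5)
The plan is to check the three defining conditions of $\mathcal{S}$ (\cref{defn:system-of-eqn}) for $\va$ directly from the equations of $\mathcal{E}$ in \cref{defn:proxy-system-eqn}. The only subtle point is that \cref{cor:final-cor} requires $\rank(M(\va)) = m-1$. I will obtain this from the first part of \cref{clm:rank-of-M}, not from its ``in particular'' clause, since we do not yet know that $\va \in \mathcal{S}$.

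First, $F(\va) = 0$. By equation \ref{item:u-nonzero} of $\mathcal{E}$ we have $\vg^T\vb = 1$, so $\vb \neq \mathbf{0}$. By equation \ref{item:left-kernel}, $\vb^T M(\va) = \mathbf{0}$. Hence $M(\va)$ is singular, and $F(\va) = \det(M(\va)) = 0$.

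Second, $a_{n-1} = 1$ is exactly equation \ref{item:xn-1}. Consequently $\frac{\partial F}{\partial x_{n-1}}(\va) = n\, a_{n-1}^{n-1} = n \neq 0$. Thus $\va$ satisfies the hypotheses of the first part of \cref{clm:rank-of-M}: $F(\va) = 0$ and some first-order partial derivative is non-zero at $\va$. That claim gives $\rank(M(\va)) = m-1$. I expect this to be the only step needing care, and as noted it uses only the general statement of \cref{clm:rank-of-M}, so there is no circularity.

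Third, the derivative equalities. We have $\vc \neq \mathbf{0}$ by equation \ref{item:v-nonzero}, and $M(\va)\vc = \mathbf{0}$ by equation \ref{item:right-kernel}. So $\vb^T$ and $\vc$ are non-zero vectors in the left and right kernels of $M(\va)$, and \cref{cor:final-cor} applies. It gives a non-zero $\lambda$ with $\frac{\partial F}{\partial x_i}(\va) = \lambda\, \vb^T M_i \vc$ for every $i \in [n]$. Multiplying each equation \ref{item:partial}, namely $\vb^T M_i \vc = \vb^T M_{n-1}\vc$ for $i \le n-2$, by $\lambda$ yields $\frac{\partial F}{\partial x_i}(\va) = \frac{\partial F}{\partial x_{n-1}}(\va)$ for all $i \le n-2$.

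All three conditions of \cref{defn:system-of-eqn} hold, so $\va \in \mathcal{S}$.
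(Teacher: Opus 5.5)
Your proposal is correct and follows the same argument as the paper. The condition $\vg^T\vb=\vh^T\vc=1$ makes $\vb,\vc$ non-zero, which gives singularity and $F(\va)=0$; then $a_{n-1}=1$ together with the general part of \cref{clm:rank-of-M} gives rank $m-1$, and \cref{cor:final-cor} turns the bilinear constraints into the derivative equalities. Your explicit remark that you use only the general (not the ``in particular'') part of \cref{clm:rank-of-M}, to avoid circularity, is a point the paper leaves implicit.
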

\begin{proof}
    By definition, $\vb^{T}, \vc$ are in the left and right kernels of $M(\va)$ respectively. Furthermore, they are both non-zero since they satisfy $\vg^T \vb = 1$, $ \vh^T \vc = 1$. Thus, $M(\va)$ is singular, and $\det(M(\va)) = F(\va) = 0$. 
    
    We also know that $\va_{n-1} = 1$. The partial derivative of $F = \det(M(\vx))$ with respect to the variable $x_{n-1}$ at the input $\va$ is non-zero. This implies from \cref{clm:rank-of-M} that $\rank(M(\va))=m-1$. From \cref{cor:final-cor}, we get that there is a non-zero constant $\lambda$ such that for every $i$, 
    \[
    \frac{\partial F}{\partial x_i} (\va) = \lambda \vb^T M_i \vc . 
    \]
    Thus, the set of equations $(\vu^TM_i\vv = \vu^TM_{n-1}\vv)$ in \cref{defn:proxy-system-eqn} implies that for every $i \leq (n-2)$, $    \frac{\partial F}{\partial x_i}(\va) =     \frac{\partial F}{\partial x_{n-1}}(\va) $. 

    Thus, $\va$ satisfies all the equations in \cref{defn:system-of-eqn}, and hence $\va \in \mathcal{S}$. 
\end{proof}
We claim that every solution $\va \in \mathcal{S}$ ``lifts'' to a unique solution $(\va,\vb,\vc) \in \mathcal{T}$.
\begin{claim}\label[claim]{clm:sols-lift}
    For every $\va \in \mathcal{S}$, there are unique $\vb, \vc$ in $\CC^m$ such that $ (\va, \vb, \vc) \in \mathcal{T}$. 
\end{claim}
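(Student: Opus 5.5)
Fix $\va \in \mathcal{S}$. The plan is to prove existence and uniqueness separately, using the structure of the kernels given by \cref{clm:rank-of-M}. By that claim, $\rank(M(\va)) = m-1$, so the left kernel of $M(\va)$ is spanned by a single non-zero vector $\vu_{\va}$ and the right kernel by a single non-zero vector $\vv_{\va}$. Every candidate $\vb$ (respectively $\vc$) satisfying items~\ref{item:left-kernel} and~\ref{item:right-kernel} of \cref{defn:proxy-system-eqn} at $\vx = \va$ must therefore be a scalar multiple $\beta \vu_{\va}$ (respectively $\gamma \vv_{\va}$).

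\emph{Existence.} By \cref{clm:uniqueness-constraints-u-v}, $\vg^T\vu_{\va} \neq 0$ and $\vh^T\vv_{\va} \neq 0$. We set
\[
\beta := 1/(\vg^T\vu_{\va}), \quad \gamma := 1/(\vh^T\vv_{\va}), \quad \vb := \beta\vu_{\va}, \quad \vc := \gamma\vv_{\va}.
\]
Then items~\ref{item:left-kernel}, \ref{item:right-kernel}, \ref{item:u-nonzero} and~\ref{item:v-nonzero} hold by construction. Item~\ref{item:xn-1} holds because $\va \in \mathcal{S}$.

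For item~\ref{item:partial}, apply \cref{cor:final-cor} to the non-zero kernel vectors $\vb, \vc$. This gives a non-zero $\lambda$ with $\frac{\partial F}{\partial x_i}(\va) = \lambda\, \vb^T M_i \vc$ for every $i$. Since $\va \in \mathcal{S}$, we have $\frac{\partial F}{\partial x_i}(\va) = \frac{\partial F}{\partial x_{n-1}}(\va)$ for all $i \le n-2$. Dividing by $\lambda \neq 0$ yields $\vb^T M_i\vc = \vb^T M_{n-1}\vc$. Hence $(\va,\vb,\vc) \in \mathcal{T}$.

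\emph{Uniqueness.} Suppose $(\va,\vb',\vc') \in \mathcal{T}$. Items~\ref{item:left-kernel} and~\ref{item:right-kernel} force $\vb' = \beta'\vu_{\va}$ and $\vc' = \gamma'\vv_{\va}$. Item~\ref{item:u-nonzero} then gives $\beta'\,\vg^T\vu_{\va} = 1$. Since $\vg^T\vu_{\va} \neq 0$, this pins down $\beta' = \beta$, and in the same way item~\ref{item:v-nonzero} gives $\gamma' = \gamma$. So $\vb' = \vb$ and $\vc' = \vc$.

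There is no real obstacle here. The only points needing care are that the kernels are exactly one-dimensional, which comes from \cref{clm:rank-of-M}, and that the normalizing linear forms do not vanish on them, which comes from the choice of $\vg, \vh$ in \cref{clm:uniqueness-constraints-u-v}. Together with \cref{clm:sols-projections}, this gives a bijection between $\mathcal{S}$ and $\mathcal{T}$.
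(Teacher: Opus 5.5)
Your proof is correct and follows essentially the same route as the paper: \cref{clm:rank-of-M} gives one-dimensional kernels, \cref{clm:uniqueness-constraints-u-v} gives normalization via $\vg,\vh$, and \cref{cor:final-cor} gives the bilinear constraints. The only cosmetic differences are that you apply \cref{cor:final-cor} directly to the normalized vectors, whereas the paper applies it to arbitrary spanning vectors and then uses homogeneity of items 1--3 under rescaling, and that you write out the uniqueness step more explicitly.
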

\begin{proof}
$\va \in \mathcal{S}$ implies that $a_{n-1}$ must equal $1$, and hence $F$ has at least one non-zero first order partial derivative at $\va$. Thus, $\va$ immediately satisfies the $x_{n-1} = 1$ constraint in \cref{defn:proxy-system-eqn}, and furthermore,  from \cref{clm:rank-of-M}, we get that $M(\va)$ has rank exactly $(m-1)$. Let $\vb', \vc' \in \CC^m$ be non-zero vectors such that $\vb'^T$ spans the left kernel of $M(\va)$ and $\vc'$ spans the right kernel of $M(\va)$. Clearly, setting $\vu, \vv$ to $\vb', \vc'$ respectively will satisfy the kernel constraints (items 1, 2) in \cref{defn:proxy-system-eqn}. Furthermore, from \cref{cor:final-cor}, we get that $\vb', \vc'$ will also satisfy the derivative constraints (item 3) in \cref{defn:proxy-system-eqn}. 

From the definition of $\vg, \vh$ in \cref{clm:uniqueness-constraints-u-v}, we get that $ \vg^T \vb' $ and $ \vh^T \vc'$ are both non-zero. Thus, by scaling $\vb', \vc'$ appropriately, we get non-zero vectors $\vb, \vc$ that satisfy
\[
 \vg^T  \vb =  \vh^T \vc = 1.
\]
Moreover, the pair $(\vb, \vc)$ is unique in the respective kernels of $M(\va)$, and in particular, $\vb, \vc$ are independent of $\vb', \vc'$, and only depend on $\va$.  Furthermore, the constraints involving $\vu, \vv$ in the first three items of \cref{defn:proxy-system-eqn} are all homogeneous in $\vu,\vv$, and since $(\va, \vb', \vc')$ satisfied those as argued above, so does $(\va,\vb,\vc)$. 

Thus, we have obtained a unique ``lift'' of $\va \in \mathcal{S}$ to get a point in $\mathcal{T}$. 
\end{proof}
From \cref{clm:sols-projections,clm:sols-lift} and the finiteness of $\mathcal{S}$, we get the following corollary. 
\begin{corollary}\label[corollary]{cor:T-finite}
    The set $\mathcal{T}$ is finite and 
    \[
    |\mathcal{T}| = |\mathcal{S}|. 
    \]
\end{corollary}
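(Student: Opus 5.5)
The corollary follows by combining \cref{clm:sols-projections} and \cref{clm:sols-lift} through the coordinate projection $\pi : \CC^{2m+n} \to \CC^n$, $\pi(\va,\vb,\vc) = \va$. The plan is to show that $\pi$ restricted to $\mathcal{T}$ is a bijection onto $\mathcal{S}$. Finiteness of $\mathcal{T}$ and the equality $|\mathcal{T}| = |\mathcal{S}|$ then follow from the finiteness of $\mathcal{S}$ established in \cref{cl:S-finite}.

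First, \cref{clm:sols-projections} says exactly that $\pi(\mathcal{T}) \subseteq \mathcal{S}$, so $\pi$ is a well-defined map from $\mathcal{T}$ to $\mathcal{S}$. Next, for surjectivity, take any $\va \in \mathcal{S}$. \cref{clm:sols-lift} provides $\vb, \vc$ with $(\va,\vb,\vc) \in \mathcal{T}$, so $\va \in \pi(\mathcal{T})$.

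For injectivity, suppose $(\va,\vb,\vc)$ and $(\va,\vb',\vc')$ both lie in $\mathcal{T}$. The uniqueness part of \cref{clm:sols-lift} forces $\vb = \vb'$ and $\vc = \vc'$. The only point needing care is that this uniqueness is genuinely over \emph{all} points of $\mathcal{T}$ lying above $\va$, not just over the lift that was constructed. To confirm this, note that any such $\vb$ is a non-zero vector in the left kernel of $M(\va)$, since $\vg^T\vb = 1$. By \cref{clm:rank-of-M} that kernel is one-dimensional, so $\vb$ is determined up to a scalar, and the normalization $\vg^T\vb = 1$ fixes the scalar. The same argument applies to $\vc$ using $\vh$.

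Hence $\pi|_{\mathcal{T}}$ is a bijection onto the finite set $\mathcal{S}$, which gives both that $\mathcal{T}$ is finite and that $|\mathcal{T}| = |\mathcal{S}|$. I expect no real obstacle here; the only step with content is the uniqueness check in the injectivity argument.
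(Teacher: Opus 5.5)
Your proposal is correct and follows essentially the same route as the paper, which derives the corollary directly from \cref{clm:sols-projections} (the projection of $\mathcal{T}$ lands in $\mathcal{S}$), \cref{clm:sols-lift} (each $\va \in \mathcal{S}$ has exactly one lift), and the finiteness of $\mathcal{S}$. Your explicit injectivity check is sound: any $\vb$ over $\va$ is a nonzero vector in the one-dimensional left kernel, and the normalization $\vg^T\vb = 1$ fixes the scalar. This is the same uniqueness reasoning the paper gives inside its proof of \cref{clm:sols-lift}.
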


\subsection{A simple $\Omega(n\log n)$ lower bound}
We are now ready to build upon the setup so far to conclude the proof of \cref{thm:main}. As a first step, we notice that the classical Bézout theorem together with the setup above immediately implies a super-linear lower bound on $m$. We recall that such a bound wasn't known before Sheshadri's work, in spite of the considerable attention received by the question of proving a super-linear lower bound on determinantal complexity in the algebraic and geometric complexity research.

From \cref{cor:T-finite}, since $\mathcal{T}$ is finite, we get from  Bézout's theorem that $|\mathcal{T}|$ is at most the product of the degrees of the polynomials in the system of equations, which is at most $2^{2m+(n-1)}$. Comparing this with the lower bound on the size of $\mathcal{S}$, we get that 
\[
2^{2m + (n-1)} \geq |\mathcal{T}| \geq |\mathcal{S}| \geq (n-1)^{(n-2)}. 
\]
This immediately implies $m = \Omega(n\log n)$. 

\subsection{A quadratic lower bound}
We now observe that a slightly stronger bound can be obtained by noting that the system of polynomial equations in \cref{defn:proxy-system-eqn} has additional structure: the variables can be partitioned into three sets, and each polynomial constraint depends on at most two of these sets and has degree at most one in each set. For such systems, a stronger form of Bézout's theorem is known in the literature.

To state it, we introduce some definitions. Let $\vx_1, \ldots, \vx_k$ be $k$ disjoint sets of variables where $|\vx_i| = n_i$. A polynomial $P$ has multi-degree $d_1, \ldots, d_k$ if it has degree at most $d_i$ in $\vx_i$, for every $i \in [k]$.

The multi-homogeneous Bézout's theorem counts the number of solutions of a system of $N$ such polynomials. It is usually attributed to Shafarevich \cite[Section 2.1]{Shafarevich2013}. We also refer to \cite{WikipediaMultihomogeneousBezout} and to \cite[Section 3.2]{JS07}, the latter of which also cites several other references with additional proofs.

These references all assume that the polynomials are multi-homogeneous   and hence count the solutions in the multi-projective variety $\mathbb{P}_{n_1} \times \mathbb{P}_{n_2} \times \cdots \times \mathbb{P}_{n_k}$ (and therefore $|\vx_i| = n_i + 1$).
One can reduce the affine case to the homogeneous case by homogenization. 
These arguments are rather standard in algebraic geometry.

We directly cite the affine version which is the most convenient for us:

\begin{theorem}[Multi-homogeneous Bézout's theorem]
\label[theorem]{thm:mulhom-bezout}
Let $\vx_1, \ldots, \vx_k$ be $k$ disjoint sets of variables such that $|\vx_i| = n_i$. Let $P_1, \ldots, P_N$ be a set of polynomials, where $P_i$ is of multi-degree $d_{i,1}, \ldots, d_{i,k}$, where $d_{i,j}$ is the degree of $P_i$ in the variables $\vx_j$, and $N=n_1 + n_2 + \cdots + n_k$.

For every $i \in [N]$, define the linear form (in variables $y_1, \ldots, y_k)$
\[
\vd_i = d_{i,1} y_1 + d_{i,2}y_2 + \cdots + d_{i,k}y_k.
\]
Then, the number of solutions to the system $\{ P_i (\vx_1, \ldots, \vx_k) = 0 : i \in [N] \}$ is either infinite, or has size at most $B$, where $B$ is the coefficient of the monomial
\[
y_1^{n_1}y_2^{n_2} \cdots y_k^{n_k}
\]
in the polynomial
\[
\vd_1 \vd_2 \cdots \vd_N.
\]
\end{theorem}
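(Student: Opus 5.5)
The plan is to deduce the affine statement from the multi-projective one by homogenizing, and to prove the multi-projective bound by a deformation argument.

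\textbf{Homogenization.} For each block $\vx_j$, introduce a new variable $x_{j,0}$. Replace $P_i$ by its multi-homogenization $\tilde P_i$, obtained by homogenizing separately in each block to degree exactly $d_{i,j}$ in $(x_{j,0},\vx_j)$. This makes $\tilde P_i$ a section of the line bundle $\mathcal{O}(d_{i,1},\dots,d_{i,k})$ on $X:=\mathbb{P}^{n_1}\times\cdots\times\mathbb{P}^{n_k}$. The affine chart $\{x_{1,0}\cdots x_{k,0}\neq 0\}$ is a copy of $\CC^{N}$ on which the zero set of the $\tilde P_i$ is exactly the affine solution set. So if the affine solution set is finite, each of its points is an \emph{isolated} point of $Z:=V(\tilde P_1,\dots,\tilde P_N)\subseteq X$. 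Note that $Z$ may still have positive-dimensional components at infinity, which is why a naive intersection count does not apply directly. It therefore suffices to show that $Z$ has at most $B$ isolated points.

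\textbf{Generic count.} The Chow ring of $X$ is $\mathbb{Z}[H_1,\dots,H_k]/(H_1^{n_1+1},\dots,H_k^{n_k+1})$, and the divisor class of $\tilde P_i$ is $\sum_j d_{i,j}H_j$. The class of a point is $H_1^{n_1}\cdots H_k^{n_k}$, so the product $\prod_i\bigl(\sum_j d_{i,j}H_j\bigr)$ equals $B$ times the point class; this is exactly the coefficient extraction in the statement. For generic sections $Q_1,\dots,Q_N$ of the same multidegrees, the line bundles are globally generated. Bertini's theorem then shows that $V(Q_1,\dots,Q_N)$ is a reduced finite set, and its cardinality equals the intersection number $B$.

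\textbf{Deformation.} Consider the family $\tilde P_i+tQ_i$ over the $t$-line $\mathbb{A}^1$, and the incidence variety $W\subseteq X\times\mathbb{A}^1$ cut out by these $N$ equations in the $(N+1)$-dimensional space. Every irreducible component of $W$ has dimension at least $1$. Let $p$ be an isolated point of $Z$. Then a component $C$ of $W$ through $(p,0)$ cannot lie inside the fiber $t=0$ near $p$. Hence $C$ is a curve dominating the $t$-line, and $p$ is a limit of points of the fibers $W_t$ as $t\to 0$.

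For generic $t$, the fiber $W_t$ is finite of size at most $B$. This follows since $\tilde P_i+tQ_i$ is again a generic-type system, or by upper semicontinuity applied to the fiber over $t=1$, which is the generic system. Distinct isolated points $p$ lie on distinct local branches, and by properness of $X$ these branches meet $W_t$ in distinct points for small $t\neq 0$. Therefore the number of isolated points of $Z$ is at most $\#W_t\le B$.

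\textbf{Main obstacle.} I expect the delicate step to be this last one: guaranteeing that the general fiber is finite with at most $B$ points, and that different isolated points of $Z$ are limits of different points of $W_t$. I would handle it by restricting to the union $W'$ of the components of $W$ that dominate $\mathbb{A}^1$. Over a punctured neighbourhood of $0$, $W'$ is a finite cover of degree at most $B$. Each isolated $p\in Z$ lies in the closure of some branch, and a point of $W'_t$ for small $t$ has a single limit point. This gives an injection from isolated points of $Z$ into $W'_t$. As an alternative, one can cite Fulton's refined Bézout theorem, which directly bounds the number of connected components of $Z$ by $B$.
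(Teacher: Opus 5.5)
Your argument is correct. The paper itself gives no proof of this statement. It cites the multi-homogeneous Bézout theorem (Shafarevich; Morgan--Sommese~\cite{MS87}; Abelard, Thm.~2.45) and only remarks that the affine version follows from the multi-projective one by homogenization.

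Your first step is exactly that reduction, carried out carefully. You correctly note that affine finiteness only makes the affine solutions \emph{isolated} points of $Z$, because $Z$ may have positive-dimensional components at infinity. So what is needed is a bound on isolated points, not a naive intersection count. Your Chow-ring computation, Bertini, and the degeneration of a generic system $\tilde P+tQ$ then give the standard proof of that bound; it is essentially the homotopy-continuation argument of Morgan--Sommese. The paper gains brevity by citing the result. Your route is self-contained and shows precisely where the finiteness hypothesis enters.

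Two small points. First, the claim that $W_t$ has at most $B$ points for all $t\neq 0$ near $0$ is cleanest via the dense Zariski-open transversality locus $\mathcal{U}$. Choose $Q\in\mathcal{U}$; then $\{s : s\tilde P+Q\in\mathcal{U}\}$ is a nonempty open, hence cofinite, subset of $\mathbb{A}^1$. Since $V(\tilde P+tQ)=V(t^{-1}\tilde P+Q)$ for $t\neq 0$, this covers all small $t\neq0$. ``Upper semicontinuity'' is not the right tool for point counts in fibers.

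Second, your closing alternative is misstated. On $\mathbb{P}^{n_1}\times\cdots\times\mathbb{P}^{n_k}$ the bundles $\mathcal{O}(d_{i,1},\ldots,d_{i,k})$ need not be ample, and positive-dimensional distinguished varieties can contribute $0$. So $B$ does not bound the number of connected components of $Z$ in general. For example, $P_1=P_2=x_1$ on $\mathbb{P}^1\times\mathbb{P}^1$ with multidegrees $(1,0)$ gives $B=0$ but $Z=\{\mathrm{pt}\}\times\mathbb{P}^1$. What global generation does give is that all contributions are nonnegative and each isolated point contributes at least $1$. That bounds only the isolated points, which is all you need.
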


The version stated here is from \cite{MS87} (see also \cite{Abelard18}, Theorem 2.45). The bound $B$ is called the \emph{Bézout number} of the system, and the original statement concerns the number of \emph{isolated} solutions, but clearly if the number of solutions is finite, every solution is isolated.

We would like to invoke \cref{thm:mulhom-bezout} for the polynomial system in \cref{defn:proxy-system-eqn}. However, for this theorem to make sense, the number of equations needs to be equal to the number of variables $n_1 + n_2 + \cdots +n_k$ since $\vd_1 \vd_2 \cdots \vd_N$ is a homogeneous polynomial of degree $N$.

Fortunately, we can amend the system in \cref{defn:proxy-system-eqn} slightly to obtain this.

\begin{definition}\label[definition]{def:square-system}
Pick a vector $\vq \in \CC^m$ such that $\vb^T\vq \neq 0$ for every $\vb \in \CC^m$ for which there exists $\va,\vc$ such that $(\va,\vb,\vc) \in \mathcal{T}$. This is possible since $\mathcal{T}$ is finite, by \cref{cor:T-finite}, using an argument identical to that in \cref{clm:uniqueness-constraints-u-v}.

Let $\Gamma \in \CC^{m-1 \times m}$ be a matrix such that $\ker \Gamma = \Span{\vq}$.

Replace the $m$ equations in item \ref{item:right-kernel} of \cref{defn:proxy-system-eqn} by the $m-1$ equations $\Gamma M(\vx) \vv = 0$ in variables $\vx$ and $\vv$.

Finally, add another variable $z$, and add the equation $z \vu^T \vq = 1$.

Let $\mathcal{T'} \subseteq \CC^{2m+n+1}$ denote the set of solutions to this system.
\end{definition}

\begin{claim}\label[claim]{cl:square-system-solutions}
$|\mathcal{T}| = |\mathcal{T'}|$
\end{claim}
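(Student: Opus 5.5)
We will exhibit an explicit bijection between $\mathcal{T}$ and $\mathcal{T'}$. Concretely, we show that the map $(\va,\vb,\vc) \mapsto (\va,\vb,\vc, 1/(\vb^T\vq))$ is well defined from $\mathcal{T}$ into $\mathcal{T'}$. We then show that the projection $(\va,\vb,\vc,z)\mapsto(\va,\vb,\vc)$ maps $\mathcal{T'}$ into $\mathcal{T}$. These two maps are mutually inverse. The key is that both systems share items \ref{item:left-kernel}, \ref{item:partial}, \ref{item:xn-1}, \ref{item:u-nonzero}, \ref{item:v-nonzero}. They differ only in replacing $M(\vx)\vv=0$ by $\Gamma M(\vx)\vv = 0$ and in adding $z\vu^T\vq=1$.

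\textbf{Forward direction.} Let $(\va,\vb,\vc)\in\mathcal{T}$. Then $M(\va)\vc=0$, so $\Gamma M(\va)\vc=0$. By the choice of $\vq$ in \cref{def:square-system} we have $\vb^T\vq\ne 0$, so setting $z=1/(\vb^T\vq)$ satisfies the new equation. Hence the image lies in $\mathcal{T'}$. The map is injective because it is the identity on the first three coordinates.

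\textbf{Backward direction.} This is the main step. Let $(\va,\vb,\vc,z)\in\mathcal{T'}$. Since $\Gamma M(\va)\vc=0$, the vector $\vw:=M(\va)\vc$ lies in $\ker\Gamma=\Span{\vq}$, so $\vw=\mu\vq$ for some $\mu\in\CC$. Multiplying on the left by $\vb^T$ and using item \ref{item:left-kernel}, $\vb^TM(\va)=0$, gives $0=\vb^TM(\va)\vc=\mu\,\vb^T\vq$. The equation $z\vb^T\vq=1$ forces $\vb^T\vq\neq0$, hence $\mu=0$ and $M(\va)\vc=0$. Note that this uses the constraint on $\vb$ itself, not the a priori choice of $\vq$, which was made only relative to $\mathcal{T}$. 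So $(\va,\vb,\vc)$ satisfies every equation of \cref{defn:proxy-system-eqn} and lies in $\mathcal{T}$. Moreover $z$ is uniquely determined as $1/(\vb^T\vq)$, so the projection is injective on $\mathcal{T'}$. It is also a left inverse of the forward map.

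\textbf{Conclusion.} Both maps are injective, so $|\mathcal{T}|=|\mathcal{T'}|$. In particular $\mathcal{T'}$ is finite by \cref{cor:T-finite}, which is what the subsequent application of \cref{thm:mulhom-bezout} needs. The only delicate point is the recovery of the full right-kernel condition from the $m-1$ projected equations. This is handled by the left-kernel condition on $\vb$ together with $\vb^T\vq\neq 0$, which is guaranteed by the auxiliary variable $z$.
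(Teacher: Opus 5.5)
Your proof is correct and follows essentially the same route as the paper. The forward map appends $z = 1/(\vb^T\vq)$, and for the converse you write $M(\va)\vc = \mu\vq$ using $\ker\Gamma = \Span{\vq}$, then force $\mu = 0$ via $\vb^T M(\va) = 0$ together with $z\,\vb^T\vq = 1$. You also state explicitly that $z$ is uniquely determined, so the two maps are mutual inverses; this makes the bijection a bit more explicit than the paper's write-up.
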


\begin{proof}
Recall that we changed the equations in item \ref{item:right-kernel}, and added the equation $z \vu^T q = 1$.

We show a bijection between the two sets.

Every solution of $\mathcal{T}$ corresponds to a unique solution in $\mathcal{T'}$: if $(\va,\vb,\vc) \in \mathcal{T}$ then $M(\va) \vc = 0$ and thus $\Gamma M(\va) \vc = 0$. Further, $\vb^T \vq \neq 0$ by the choice of $\vq$, so there is a unique choice $z=\frac{1}{\vb^T \vq}$ such that $(\va,\vb,\vc,\frac{1}{\vb^T \vq}) \in \mathcal{T'}$.

Conversely, let $(\va,\vb,\vc,\beta) \in \mathcal{T}'$. We will show that $(\va,\vb,\vc) \in T$.

We need to show that $M(\va)\vc = 0$. It holds that $\Gamma M(\va) \vc = 0$, so $M(\va)\vc = \alpha \vq$ for some $\alpha \in \CC$ since it is in $\ker (\Gamma)$. But since $(\va,\vb,\vc, \beta)$ is a solution to the system in \cref{def:square-system}, $\vb$ satisfies the equation $\vb^T M(\va) = 0$ (in item \ref{item:left-kernel} of \cref{defn:proxy-system-eqn}) so
\[
0 = \vb^T M(\va) \vc = \alpha \vb^T\vq.
\]
Since $\beta \vb^T\vq = 1$, $\vb^T\vq \neq 0$, and we must have $\alpha = 0$ and $M(\va)\vc = 0$.
\end{proof}

\begin{lemma}
\label[lemma]{lem:num-solutions-by-mulhom-bezout}
The size of $\mathcal{T}'$, that is, the number of solutions to the system in \cref{def:square-system}, is at most $2^{n-2} \cdot \left( \frac{e(2m-1)}{n-1}\right)^{n-1} $.
\end{lemma}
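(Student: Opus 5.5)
The plan is to apply \cref{thm:mulhom-bezout} directly to the square system of \cref{def:square-system}, with the variables split into four groups: $\vx$ ($n$ variables), $\vu$ ($m$ variables), $\vv$ ($m$ variables) and $z$ ($1$ variable), for $N = 2m+n+1$ variables in total. By \cref{cl:square-system-solutions} and \cref{cor:T-finite}, $\mathcal{T}'$ is finite, so the theorem says $|\mathcal{T}'|$ is at most the Bézout number $B$. What remains is to write down the multi-degrees, check that the system is square, and estimate $B$.

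\textbf{Step 1: multi-degrees.} Let $X,U,V,Z$ be the formal variables attached to the four groups. The equations and their linear forms $\vd_i$ are as follows.
\begin{itemize}
\item The $m$ equations $\vu^T M(\vx)=0$ each give $X+U$.
\item The $m-1$ equations $\Gamma M(\vx)\vv = 0$ each give $X+V$.
\item The $n-2$ bilinear equations $\vu^T M_i \vv = \vu^T M_{n-1}\vv$ each give $U+V$.
\item The equation $x_{n-1}=1$ gives $X$.
\item The equation $\vg^T\vu=1$ gives $U$.
\item The equation $\vh^T\vv=1$ gives $V$.
\item The equation $z\vu^T\vq=1$ gives $Z+U$.
\end{itemize}
(Using an upper bound on the degree in each group is harmless, since enlarging the $d_{i,j}$ only increases $B$.) The number of equations is $m+(m-1)+(n-2)+4 = 2m+n+1$, which matches the number of variables.

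\textbf{Step 2: extract the coefficient.} We need the coefficient of $X^nU^mV^mZ$ in
\[
(X+U)^m(X+V)^{m-1}(U+V)^{n-2}\, X\, U\, V\,(Z+U).
\]
The only source of $Z$ is the last factor. The single factors $X,U,V$ then each supply one power. So $B$ equals the coefficient of $X^{n-1}U^{m-1}V^{m-1}$ in $(X+U)^m(X+V)^{m-1}(U+V)^{n-2}$; the degrees agree, both being $2m+n-3$.

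Suppose we pick $X$ from $a$ factors of $(X+U)^m$ and from $b$ factors of $(X+V)^{m-1}$, and $U$ from $c$ factors of $(U+V)^{n-2}$. Matching exponents forces $a+b=n-1$ and $c=a-1$, and the $V$-exponent is then automatically consistent. Hence
\[
B=\sum_a \binom{m}{a}\binom{m-1}{n-1-a}\binom{n-2}{a-1}.
\]

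\textbf{Step 3: estimate.} Bound $\binom{n-2}{a-1}\le 2^{n-2}$ and apply Vandermonde's identity to get
\[
\sum_a \binom{m}{a}\binom{m-1}{n-1-a} = \binom{2m-1}{n-1}.
\]
Then the standard bound $\binom{N}{k}\le (eN/k)^k$ gives
\[
B \le 2^{n-2}\left(\frac{e(2m-1)}{n-1}\right)^{n-1},
\]
as claimed.

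The main obstacle is not the computation but choosing the grouping so that the system is square and each equation has degree at most one per group. In particular, $z$ must be its own group, since placing it with $\vu$ would give the equation $z\vu^T\vq=1$ degree two in that group. This is exactly what \cref{def:square-system} arranges. The only remaining care is the bookkeeping of which single factors absorb one power of each variable, which is done in Step 2.
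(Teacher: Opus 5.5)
Your proposal is correct and follows the paper's proof essentially step for step. It uses the same four-group partition $\vx,\vu,\vv,z$, the same linear forms, the same reduction to the coefficient of $y_1^{n-1}y_2^{m-1}y_3^{m-1}$ in $(y_1+y_2)^m(y_1+y_3)^{m-1}(y_2+y_3)^{n-2}$, and the same estimate via $\binom{n-2}{a-1}\le 2^{n-2}$, Vandermonde, and $\binom{N}{k}\le (eN/k)^k$. You also state the finiteness of $\mathcal{T}'$ explicitly (via \cref{cl:square-system-solutions} and \cref{cor:T-finite}), which the paper relies on only implicitly when it applies \cref{thm:mulhom-bezout}.
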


\begin{proof}
Our sets of variables are denoted $\vx,\vu,\vv,z$, so that, in the notation of \cref{thm:mulhom-bezout}, we have $n_1=n, n_2=n_3=m$ and $n_4 = 1$.
We list the multi-degrees of the equations, in the order listed in \cref{defn:proxy-system-eqn} (but recall we are working with the amended system in \cref{def:square-system}), and deduce the corresponding linear form in the variables $y_1, y_2, y_3,y_4$, in the notation of \cref{thm:mulhom-bezout}.

\begin{enumerate}
    \item $(1,1,0,0)$ ($m$ equations), and for each the corresponding linear form is $y_1+y_2$
    \item $(1,0,1,0)$ ($m-1$ equations, with linear form $y_1 + y_3$)

    Here the number of equations is $m-1$, due to the amendment detailed in \cref{def:square-system}.
    \item $(0,1,1,0)$ ($n-2$ equations, with linear form $y_2+y_3$)
    \item $(1,0,0,0)$ (1 equation, with linear form $y_1$)
    \item $(0,1,0,0)$ (1 equation, with linear form $y_2$)
    \item $(0,0,1,0)$ (1 equation, with linear form $y_3$)
    \item Finally, we added the equation $z \vu^T \vq = 1$, which has multi-degree $(0,1,0,1)$ with linear form $y_2 + y_4$.
\end{enumerate}

The corresponding Bézout number is therefore the coefficient of $y_1^n y_2^m y_3^m y_4$ in the polynomial
\[
(y_1+y_2)^m \cdot (y_1 + y_3)^{m-1} \cdot (y_2+y_3)^{n-2} \cdot y_1 \cdot y_2 \cdot y_3 \cdot (y_2+y_4),
\]
or the coefficient of $y_1^{n-1} y_2^{m-1} y_3^{m-1}$ in
\[
(y_1+y_2)^m \cdot (y_1 + y_3)^{m-1} \cdot (y_2+y_3)^{n-2}.
\]
Expanding the product, this polynomial equals
\[
\left( \sum_{k_1=0}^m \binom{m}{k_1} y_1^{k_1} y_2^{m-k_1} \right) \cdot \left( \sum_{k_2=0}^{m-1} \binom{m-1}{k_2} y_1^{k_2} y_3^{m-1-k_2} \right) \cdot \left( \sum_{k_3=0}^{n-2} \binom{n-2}{k_3} y_2^{k_3} y_3^{n-2-k_3} \right)
\]

Suppose we pick from the first factor a summand $k_1=k$, for some value of $k$, which gives us $y_1^k y_2^{m-k}$. Then in order to contribute to the coefficient of $y_1^{n-1} y_2^{m-1} y_3^{m-1}$ we must pick $k_2 = n-1-k$ from the second factor, and $k_3 = k-1$ from the third. The relevant coefficient is therefore
\[
B := \sum_{k = 0}^m \binom{m}{k} \cdot \binom {m-1}{n-1-k} \cdot \binom{n-2}{k-1}
\]
For simplicity, we use the convention $\binom{i}{j} = 0$ whenever $j > i$ or $j<0$. Since $\binom{n-2}{k-1} \leq 2^{n-2}$, we get
\[
B \leq 2^{n-2} \cdot \sum_{k=0}^n \binom{m}{k} \cdot \binom{m-1}{n-1-k}  = 2^{n-2} \binom{2m-1}{n-1} \le 2^{n-2} \cdot \left( \frac{e(2m-1)}{n-1} \right)^{n-1}. \qedhere
\]
\end{proof}

We can now finish the proof of \cref{thm:main}.

\begin{proof}[Proof of \cref{thm:main}]
Let $M(\vx)$ be a determinantal representation of the polynomial $\sum_{i=1}^n x_i^n$. Let $\mathcal{S}, \mathcal{T}, \mathcal{T}'$ be the sets as defined above.

By \cref{cor:T-finite}, we have that $|\mathcal{T}|$ is finite and $|\mathcal{T}| \ge (n-1)^{(n-2)}$. By \cref{lem:num-solutions-by-mulhom-bezout} and \cref{cl:square-system-solutions}, we have that $|\mathcal{T}| \le 2^{n-2} \cdot \left( \frac{e(2m-1)}{n-1} \right)^{n-1}$.

Combining these bounds, we get that indeed $m=\Omega(n^2)$.
\end{proof}

\bibliographystyle{alphaurlpp}
\bibliography{references}

\end{document}